\newcommand{\set}[1]{\ensuremath{\left\{#1\right\}}}
\newcounter{tmpPolyominoSquareBool}
\newcounter{tmpPolyominoColumnBool}
\newcounter{tmpPolyominoColumnCounter}
\newcommand\hSquare[3]{\draw[fill=#1] (#2,#3) rectangle (#2+1,#3+1);}
\newcommand\hPattSquare[2]{\hSquare{white}{#1}{#2}
                           \draw[pattern=north east lines]
                             (#1,#2) rectangle (#1+1,#2+1);}
\newcommand\dominoTall[3][cyan]{%
  \draw[fill=#1] (#2,#3+1) rectangle (#2+1,#3+2+1);}
\newcommand\dominoLong[3][cyan]{%
  \draw[fill=#1] (#2,#3+1) rectangle (#2+2,#3+1+1);}
\newcommand\hSwapColorSquare[3]{
  \def\counter{\arabic{#3}}
  \if\counter0{\hSquare{black}{#1}{#2}}\addtocounter{#3}{1}
  \else{\hSquare{white}{#1}{#2}}\setcounter{#3}{0}\fi}
\newcommand{\hColumn}[3]{
  \def\arg{#2}
  \if\arg0\relax\else{
  \setcounter{tmpPolyominoSquareBool}{#3}
    \foreach \y in {1,...,#2}{
      \hSwapColorSquare{#1}{\y}{tmpPolyominoSquareBool}}}\fi}
\newcommand{\hManhattan}[2]{
  \setcounter{tmpPolyominoColumnBool}{0}
  \setcounter{tmpPolyominoColumnCounter}{0}

  \def\counter{\arabic{tmpPolyominoColumnBool}}

  \foreach \height in {#2}{
    \if\counter0{
      \hColumn{#1+\value{tmpPolyominoColumnCounter}}{\height}{0}
      \addtocounter{tmpPolyominoColumnBool}{1}}
    \else{
      \hColumn{#1+\value{tmpPolyominoColumnCounter}}{\height}{1}
      \setcounter{tmpPolyominoColumnBool}{0}}\fi
    
    \addtocounter{tmpPolyominoColumnCounter}{1}}}
\newcommand{\Manhattan}[2][]{
  \hManhattan{0}{#2}

  \def\arg{#1}\def\null{}
  \if\arg\null\relax\else{
  \setcounter{tmpPolyominoColumnCounter}{0}
  \foreach \lab in {#1}{
    \draw (\value{tmpPolyominoColumnCounter}+0.5,0.5) node { \small\lab } ;
    \addtocounter{tmpPolyominoColumnCounter}{1}}}\fi
}
\newcommand{\hSpacedColumns}[3]{
  \setcounter{tmpPolyominoColumnBool}{0}
  \setcounter{tmpPolyominoColumnCounter}{0}

  \def\counter{\arabic{tmpPolyominoColumnBool}}

  \foreach \height in {#2}{
    \if\counter0{
      \hColumn{#1+\value{tmpPolyominoColumnCounter}*#3}{\height}{0}
      \addtocounter{tmpPolyominoColumnBool}{1}}
    \else{
      \hColumn{#1+\value{tmpPolyominoColumnCounter}*#3}{\height}{1}
      \setcounter{tmpPolyominoColumnBool}{0}}\fi
    
    \if\height0\relax\else\addtocounter{tmpPolyominoColumnCounter}{1}\fi}}
\newcommand{\SpacedColumns}[3][]{
  \hSpacedColumns{0}{#2}{#3}

  \def\arg{#1}\def\null{}
  \if\arg\null\relax\else{
  \setcounter{tmpPolyominoColumnCounter}{0}
  \foreach \lab in {#1}{
    \draw (\value{tmpPolyominoColumnCounter}*#3+0.5,0.5) node { \small\lab } ;
    \addtocounter{tmpPolyominoColumnCounter}{1}}}\fi
}
\newcommand{\upperBound}[1]{\ensuremath{O\left(#1\right)}}
\author{Olivier Bodini\thanks{Supported by ANR contract GAMMA,
    ``G\'en\'eration Al\'eatoire Mod\`eles, M\'ethodes et Algorithmes'',
    BLAN07-2 195422. } \and J\'er\'emie Lumbroso}
\title{Optimal Partial
  Tiling of Manhattan~Polyominoes}
\institute{
LIP6, UMR 7606, CALSCI departement,
Universit\'e Paris 6 / UPMC,\\
104, avenue du pr\'esident Kennedy,\\ 
F-75252 Paris cedex 05, France\\
}
\date{\today}
\begin{document}

\maketitle
\begin{abstract}
  Finding an efficient optimal partial tiling algorithm is still an open
  problem. We have worked on a special case, the tiling of Manhattan
  polyominoes with dominoes, for which we give an algorithm linear in the
  number of columns. Some techniques are borrowed from traditional graph
  optimisation problems.
\end{abstract}

For our purpose, a \emph{polyomino} is the (non necessarily connected) union of unit squares (for which we
will consider the vertices to be in $\mathbb{Z}^2$) and a \emph{domino}
is a polyomino with two edge-adjacent unit squares.

To solve the domino tiling problem \cite{Fo,It,Ke,Re,Thi,Th} for
a polyomino $P$ is equivalent to finding a perfect matching in the
edge-adjacent graph $G_p$ of $P$'s unit squares. A specific point of view
to study tiling problems has been introduced by Conway and Lagarias in
\cite{CL}: they transformed the tiling problem into a combinatorial group
theory problem. Thurston \cite{Th} built on this idea by introducing the
notion of height, with which he devised a linear-time algorithm to tile
a polyomino without holes with dominoes. Continuing these works, Thiant
\cite{Thi}, and later Bodini and Fernique \cite{BF} respectively obtained
an \upperBound{n\log n} algorithm which finds a domino tiling for
a polyomino whose number of holes is bound, and an \upperBound{n\log^3 n}
algorithm with no constraint as to the number of holes. All these advances
involved the ``exact tiling problem''.

Naturally, all aforementioned exact tiling algorithms are useless when
confronted with polyominoes that cannot be perfectly tiled with dominos.
Such algorithms will output only that the polyominoes cannot be
tiled---and this is inconvenient because a perfect tiling is not
necessarily required: we might want partial tilings with at most $x$
untiled squares.

Thus, by analogy with matching problems in graph theory and the notion of
maximum matching, it seems interesting to study the corresponding notion
of optimal partial tiling. No algorithm has been specifically designed
with this purpose in mind; and whether the optimal partial tiling problem
for polyominoes can be solved by a linear-time algorithm has been an open
problem for the past 15 years. In this paper, we are studying the partial
optimal domino tiling problem on the Manhattan class of polyominoes (see
figure \ref{fig1}).

A \emph{Manhattan polyomino} is a sequence of adjacent columns that all
extend from the same base line (``\emph{Manhattan}'' refers to the fact
that this polyomino looks like a skyline). We are exhibiting an algorithm
that makes solving this problem much more efficient than solving the
corresponding maximum matching problem. Indeed, the best known algorithm
for maximum matchings in bipartite graphs is in \upperBound{\sqrt{n}m}
\cite{GT,HK} (where $n$ is the number of vertices and $m$ the number of
edges) and it yields an algorithm in \upperBound{n^{3/2}} for the partial
tiling problem, whereas our algorithm is linear in the number of columns. The choice of the Manhattan class of polyominoes is justified
as a generalisation of the seminal result of Boug\'e and Cosnard \cite{BC}
which establishes that a trapezoidal polyomino (i.e. a monotic Manhattan
polyomino, in which the columns are ordered by increasing height) is
tilable by dominoes if and only if it is balanced. The partial tiling
problem on more general classes seems to be out of reach at present.

\begin{figure}[htbp]
  \centering
  \begin{tikzpicture}[scale=0.6]
    \Manhattan{4,2,4,4,1,2,2,2,4,4}
  \end{tikzpicture}
  \caption{A Manhattan polyomino.}
  \label{fig1}
\end{figure}
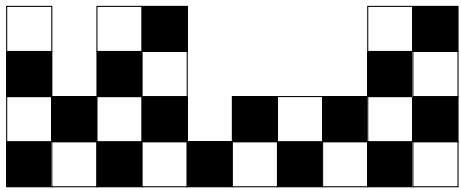

This paper first succinctly recalls some definitions on tilings,
polyominoes in general and Manhahattan polyominoes in particular; we list
the notations that we will use throughout the paper (section 1). Follows
a presentation of the main idea of reducing the partial tiling problem to
a network flow problem (section 2); we then prove this reduction to be
valid (section 3). A greedy algorithm to solve this specific network
problem is given (section 4), which allows us to conclude (section 5).

\section{Definitions}
\label{sec:defs}

An \emph{optimal partial tiling} by dominoes of a polyomino $P$ is a set
$\set{D_1, \ldots, D_k}$ of vertical or horizontal dominoes placed in the
discret plane such that:
\begin{enumerate}[(i)]
\item $\forall i \in \set{1, \ldots, k} D_i \subset P$;
\item the dominoes $D_1, \ldots, D_k$ have mutually disconnected
  interiors: for every $i, j$ such that $1 \leqslant i \leqslant
  j \leqslant k$, we have $\mathop{D_i}\limits^\circ \cap
  \mathop{D_j}\limits^\circ = \emptyset$ where $\mathop{D}\limits^\circ$
  is the interior of $D$ for the topology endowed with the euclidean
  distance on $\mathbb{R}^2$;
\item $k$ is maximal: if $k'$, $k' \geqslant k$, and $D'_1, \ldots D'_k$
  exists with the above two conditions, then $k = k'$.
\end{enumerate}
A \emph{column} is the union of unit squares placed one on top of the
other. A \emph{Manhattan polyomino} is the connected union of columns
aligned on the same horizontal line.

We will identify a Manhattan polyomino with a tuple: the sequence of the
heights of all the columns of the polyomino (ordered from left to right).
For example, the polyomino in figure \ref{fig1} is represented as $(4, 2,
4, 4, 1, 2, 2, 2, 4, 4)$. This representation is clearly more convenient
in memory space than, say, the list of all unit squares of a polyomino.

It is convenient to define \emph{inclusion}, and we will do so using the
tuple representation: polyomino $P=\left(p_1, p_2, \ldots, p_n\right)$ is
said to be included in $Q=\left(q_1, q_2, \ldots, q_n\right)$ if $\forall
i, p_i \leqslant q_i$ (both polyominoes can be padded with zero-sized
columns if necessary).

In addition, the unit squares of a polyomino have a chessboard-like
coloration; specifically\footnote{And regardless of the basis, as both
  colors are symmetrical in purpose (i.e.: it doesn't matter whether
  a given unit square is white or black, so long as the whole polyomino is
  colored in alternating colors).}, the unit square with position $(i, j)$
(i.e.: unit square $(i, j) + [0,1[^2$) is \emph{white} if $i+j$ is even
and \emph{black} otherwise. A column is said to be \emph{white dominant}
(or \emph{black dominant}) if it has more white (or black) unit squares
than black (or white) ones.

\section{From polyominoes to flow networks}
\label{sec:to-flow}

In this section, without loss of generality, we consider the polyominoes
are balanced (a polyomino is \emph{balanced} if it contains the same count
of white and black unit squares). Indeed, a polyomino can always be
balanced by adding the necessary number of isolated unit squares of the
lacking color.

\paragraph{Notations.}

Let $(h_1, \ldots, h_n)$ be a Manhattan polyomino $P$.

For $i \leqslant j$, we define $X(i, j)$ as the number $\min_{i\leqslant
  k \leqslant j} \set{h_k}$, the height of the smallest column contained in
the subset $\set{h_i, \ldots, h_j}$.

We define $B(P)$ (and respectively $W(P)$) as the set of black unit
squares, (or white unit squares) contained in $P$. Let $I_P$ (and resp.
$J_P$) be the set of indices of black dominant columns (and resp. white
dominant columns). We define $s_1, \ldots, s_n$ as the elements of $I_P
\cup J_P$ sorted in ascending order.

Let $G=(V,E)$ be a graph, and $S \subset V$. We define $\Gamma(S)$ as the
subset of \emph{neighbor vertices} of $S$, i.e. the vertices $y \in V$
such that there exists $x \in S$ with $(x, y) \in E$ or $(y, x) \in E$.

\paragraph{Construction of the flow network.}

For each polyomino $P$, we build a directed graph (\emph{flow
  network}) that we call $F_P$:
\begin{itemize}
\item its vertex set is $\set{s_1, \ldots, s_n} \cup \set{s, t}$ where $s$
  and $t$ are two additional vertices respectively called \emph{source}
  and \emph{sink} (be mindful of the fact that we use $s_i$ both to refer
  to the vertices, and to the actual columns);
\item its edge set $E$ is defined by:
  \begin{align*}
    E & {} = \set{\left(s_i, s_{i+1}\right) ; i \in \set{1, \ldots, n-1}} \\
    & {} \cup \set{\left(s_{i+1}, s_i\right) ; i \in \set{1, \ldots, n-1}} \\
    &{} \cup \set{(s, a) ; a \in I_P} \cup \set{(a, t) ; a \in J_P}\text{,}
  \end{align*}
  and each arc of $E$ is weighted by a function $c$ called
  \emph{capacity} function defined by:
  \[
    c(e) = \left\{
      \begin{array}{ccl}
        \left\lceil\frac{X\left(s_i, s_{i+1}\right)}{2}\right\rceil&\qquad
        &\text{if $e = \left(s_i, s_{i+1}\right)$ or
          $e = \left(s_{i+1}, s_i\right)$}\\
        1&\qquad&\text{if $e=(s, a)$ with $a \in I_P$}\\
        1&\qquad&\text{if $e=(b, t)$ with $b \in J_P$}
      \end{array}\right.
  \]
  where $x \mapsto \lceil x\rceil$ is the ceiling function (which returns
  the smallest integer not lesser than $x$).
\end{itemize}

%
%
%

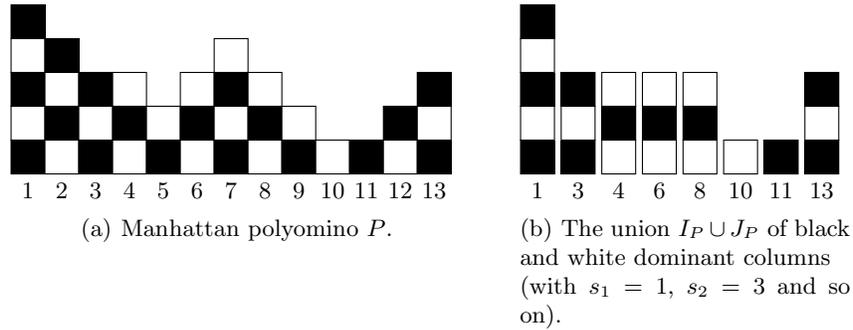
\begin{figure}[htbp]
  \begin{center}
    \subfigure[Manhattan polyomino $P$.\label{fig2a}]{\begin{tikzpicture}[scale=0.45]
      \Manhattan[1,2,3,4,5,6,7,8,9,10,11,12,13]{5,4,3,3,2,3,4,3,2,1,1,2,3}
    \end{tikzpicture}
}
     \qquad%
    \subfigure[The union $I_P \cup J_P$ of black and white dominant
  columns\newline (with $s_1 = 1$, $s_2 = 3$ and so on).
    \label{fig2b}]{\begin{tikzpicture}[scale=0.45]
      \SpacedColumns[1,3,4,6,8,10,11,13]{5,0,3,3,0,3,0,3,0,1,1,0,3}{1.2}
    \end{tikzpicture}
}
    \caption{Construction of a $\set{s_1, \ldots, s_n}$ set.}
  \label{fig2}
  \end{center}
\end{figure}

\paragraph{Example.}

Figures \ref{fig2} and \ref{fig3} illustrate the construction described
above. The polyomino $P$ in figure \ref{fig2a} contains four black
dominant columns and four white dominant columns which have been extracted
and are represented on figure \ref{fig2b}.

This yields the set $I_P \cup J_P = \set{s_1 = 1, s_2 = 3, \ldots, s_8 =
  13}$ from which we can build the flow network in figure \ref{fig3}.

\begin{figure}[htbp]
  \centering
  \begin{tikzpicture}[scale=0.7]
    \tikzstyle{arc}=[->, >=triangle 45, very thin]
    \tikzstyle{inside arc}=[->, >=triangle 45, very thin, bend right=45]
    \tikzstyle{bc}=[shape=circle, draw, fill=black]
    \tikzstyle{wc}=[shape=circle, draw, fill=white]
    \tikzstyle{sink}=[shape=circle, draw, fill=black!3]
    \tikzstyle{lbl}=[auto, fontsize=7pt]

    \path (3.5*1.3,3)  node[sink] (t) {\small t}
          (3.5*1.3,-3) node[sink] (s) {\small s};

    \path (0*1.3,0) node[bc] (1) {}
          (1*1.3,0) node[bc] (2) {}
          (2*1.3,0) node[wc] (3) {}
          (3*1.3,0) node[wc] (4) {}
          (4*1.3,0) node[wc] (5) {}
          (5*1.3,0) node[wc] (6) {}
          (6*1.3,0) node[bc] (7) {}
          (7*1.3,0) node[bc] (8) {};

    \draw[inside arc] (1) to node[swap,auto] {\footnotesize 2} (2);
    \draw[inside arc] (2) to node[swap,auto] {\footnotesize 2} (3);
    \draw[inside arc] (3) to node[swap,auto] {\footnotesize 1} (4);
    \draw[inside arc] (4) to node[swap,auto] {\footnotesize 2} (5);
    \draw[inside arc] (5) to node[swap,auto] {\footnotesize 1} (6);
    \draw[inside arc] (6) to node[swap,auto] {\footnotesize 1} (7);
    \draw[inside arc] (7) to node[swap,auto] {\footnotesize 1} (8);

    \draw[inside arc] (2) to node[swap,auto] {\footnotesize 2} (1);
    \draw[inside arc] (3) to node[swap,auto] {\footnotesize 2} (2);
    \draw[inside arc] (4) to node[swap,auto] {\footnotesize 1} (3);
    \draw[inside arc] (5) to node[swap,auto] {\footnotesize 2} (4);
    \draw[inside arc] (6) to node[swap,auto] {\footnotesize 1} (5);
    \draw[inside arc] (7) to node[swap,auto] {\footnotesize 1} (6);
    \draw[inside arc] (8) to node[swap,auto] {\footnotesize 1} (7);

    \draw[arc, bend left=45] (s) to node[auto] {\footnotesize 1} (1);
    \draw[arc, bend left=35] (s) to node[swap,auto] {\footnotesize 1} (2);
    \draw[arc, bend right=35] (s) to node[auto] {\footnotesize 1} (7);
    \draw[arc, bend right=45] (s) to node[swap,auto] {\footnotesize 1} (8);

    \draw[arc, bend left=45] (3) to node[auto] {\footnotesize 1} (t);
    \draw[arc, bend left=10] (4) to node[auto] {\footnotesize 1} (t);
    \draw[arc, bend right=10] (5) to node[swap,auto] {\footnotesize 1} (t);
    \draw[arc, bend right=45] (6) to node[swap,auto] {\footnotesize 1} (t);
  \end{tikzpicture}

  \caption{Flow network associated with the polyomino of figure \ref{fig2a}.}
  \label{fig3}
\end{figure}
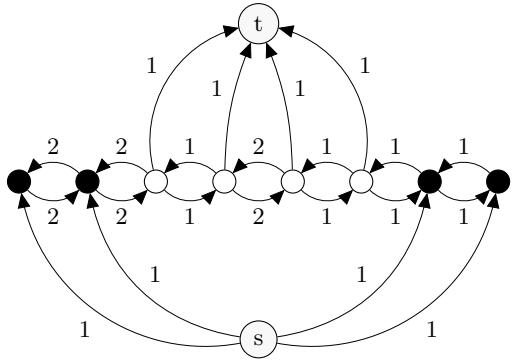

For instance, the capacity of arc $(s_1, s_2)$ is $c\left((s_1,
  s_2)\right) = 2$ because the smallest column between the indices $s_1 =
1$ and $s_2 = 3$ is the column $3$ and its height is $3$ (what this boils
down to is $X(1, 3) = 3$).

\section{Main theorem: An overview}

As might have been evident in the previous sections, oddly-sized columns,
whether white or black dominant, are what make domino tiling of Manhattan
polyominoes an interesting problem: indeed, evenly-sized columns (of
a polyomino comprising only such columns) can conveniently be tiled with
half their size in dominoes.

\subsection{Sketch of the method}

Let $P$ be a polyomino. Our method can intuitively be summarized in
several steps:

\begin{itemize}
\item we devise a ``\emph{planing}'' transformation which partially tiles
  the polyomino $P$ in a locally optimal way (i.e.: such that it is
  possible to obtain an optimal partial tiling for $P$ by optimally tiling
  the squares which remain uncovered after the transformation); this
  transformation, as illustrated in figure \ref{fig4c}, ``\emph{evens
    out}'' a black dominant column and a white dominant column;

\item we devise a way to construct a maximum flow problem that translates
  exactly how the planing transformation can be applied to the polyomino
  $P$;

\item we prove that the solutions to the maximum flow problem constructed
  from $P$, and to the optimal partial tiling of $P$ are closely linked
  (by a formula).
\end{itemize}

In truth, these steps are intertwined: the local optimality of the planing
transformation is demonstrated using the maximum flow problem.
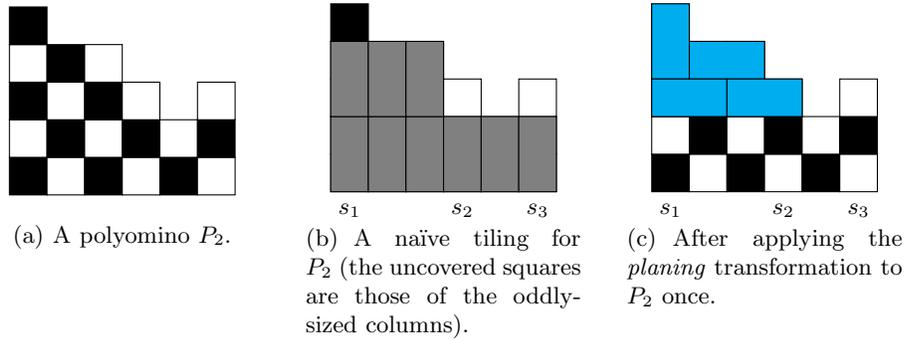
\begin{figure}[htbp]
 \begin{center}
    \subfigure[A polyomino $P_2$.\label{fig4a}]
    {\begin{minipage}{0.3\linewidth}\centering
    \begin{tikzpicture}[scale=0.5]
      \Manhattan[\quad]{5,4,4,3,2,3}
    \end{tikzpicture}\end{minipage}}%
\hfill%
 \subfigure[A na\"ive tiling for $P_2$ (the uncovered squares are those of
  the oddly-sized columns).\label{fig4b}]
{\begin{minipage}{0.3\linewidth}\centering
    \begin{tikzpicture}[scale=0.5]
      \Manhattan[$s_1$,,,$s_2$,,$s_3$]{5,4,4,3,2,3}
      \foreach \i in {0,...,5}{\dominoTall[gray]{\i}{0}}
      \foreach \i in {0,...,2}{\dominoTall[gray]{\i}{2}}
    \end{tikzpicture}\end{minipage}}%
  \hfill%
  \subfigure[After applying the \emph{planing} transformation to $P_2$
  once. \label{fig4c}]{\begin{minipage}{0.3\linewidth}\centering
    \begin{tikzpicture}[scale=0.5]
      \Manhattan[$s_1$,,,$s_2$,,$s_3$]{5,4,4,3,2,3}
      \dominoTall{0}{3}
      \dominoLong{1}{3}\dominoLong{0}{2}\dominoLong{2}{2}
    \end{tikzpicture}\end{minipage}}

  \caption{Tilings of $P_2$, which has three oddly-sized columns.}
  \label{fig4}
\end{center}
\end{figure}

\subsection{The planing (or leveling) transformation}

The planing transformation we just mentionned can be expressed as a map
$\phi$, which can only be applied to a polyomino $P$ which has two
\emph{consecutive}\footnote{By this we mean that the two dominant columns
  are consecutive in the $s_1, s_2, \ldots$ ordering (and this does not
  necessarily imply that they are adjacent in $P$); in figure \ref{fig4c},
  dominant columns $s_1$ and $s_2$ are consecutive (so are $s_2$ and
  $s_3$).} oddly-sized columns $s_i, s_{i+1}$ of \emph{different} dominant
colors. This condition\footnote{Wherein it is obvious that the \textbf{or}
  is mutually exclusive.} can be summarized as
\begin{align*}
  \exists i, \left(s_i \in I_P \text{ and } s_{i+1} \in J_P
    \quad\text{\textbf{or}}\quad s_i \in J_P \text{ and } s_{i+1} \in
    I_P\right)\text{.}
\end{align*}

Let $P=\left(h_1, \ldots h_n\right)$ be a Manhattan polyomino, and let $i$
be the smallest integer such that $s_i, s_{i+1}$ are two oddly-sized
columns of different dominant colors and $c(s_i,s_{i+1})\neq 0$, then $P' = \phi(P)$ is a new
polyomino defined by
\begin{align*}
  P' = \left(h_1, \ldots, h_{s_i - 1}, a, \ldots, a, h_{s_{i+1} +1},
    \ldots, h_n\right)
\end{align*}
where $a = \min\left(h_{s_i} - 1, h_{s_i+1} - 2, \ldots,
  h_{s_{i+1} -1} -2, h_{s_{i+1}}-1\right)$, i.e.: all columns from $s_i$ to
$s_{i+1}$ are \emph{leveled} to height $a$, which is at least one unit
square smaller than the smallest of $s_i$ and $s_{i+1}$
 and two unit
squares smaller than the smallest even-sized column of $P$ from $s_i$ to $s_{i+1}$
.

Since we have only decreased the heights of the columns of $P$ to obtain
$P'$, $\phi(P) = P' \subset P$ is trivial (following the definition of
inclusion we gave in section \ref{sec:defs}).

\begin{figure}[htbp]
 \begin{center}
 \subfigure[At least one unit square smaller than the smallest odd-sized
  column ...\label{fig5a}]{
    \begin{tikzpicture}[scale=0.4]
      \Manhattan[$s_1$,,,,,$s_2$,,,]{3,6,8,6,6,5,4,2,2}
      \foreach \i in {0,2,...,4}{\dominoLong{\i}{2}}
      \dominoLong{1}{3}\dominoLong{3}{3}\dominoTall{5}{3}
      \foreach \i in {1,...,4}{\dominoTall{\i}{4}}
      \dominoTall{2}{6}
    \end{tikzpicture}
  }%
  \qquad%
  \subfigure[... at least two unit squares smaller than the smallest
  even-sized column. \label{fig5b}]{
    \begin{tikzpicture}[scale=0.4]
      \Manhattan[$\,$,,,,$s_1$,,,$s_2$,,,$s_3$,,$s_4$]{%
        2,2,4,4,7,6,4,5,4,4,3,2,3}
      \dominoLong{4}{2}\dominoLong{6}{2}
      \dominoTall{4}{3}\dominoLong{5}{3}\dominoTall{7}{3}
      \dominoTall{5}{4}
      \dominoTall{4}{5}
    \end{tikzpicture}
  }

  \caption{Generic examples of the planing transformation.}
  \label{fig5}
\end{center}

\end{figure}

We will show that this transformation is locally optimal: again, this
means that if optimal tilings of $\phi(P)$ have $x$ non-covered unit
squares, then so do optimal tilings of $P$ (the number of non-covered unit
squares is invariant with regards to the transformation).

\subsection{Main theorem}

Having constructed a flow network $F_P$ from a given Manhattan polyomino
$P$, we now show that, if we call $d(P)$ the number of non-covered unit
squares in an optimal partial tiling of $P$, then the value $v(P)$ of the
maximum flow on $F_P$ is such that:
\[
\left|I_P\right| + \left|J_P\right| - 2v(P) = d(P)\text{.}
\]
The meaning of this equation is quite intuitive: it says that the number
of uncovered squares $d(P)$, is exactly the number of single squares that
come both from the black dominant columns $I_P$ and from the white
dominant columns $J_P$, from which we withdraw two squares for each of the
$v(P)$ times we are able to apply the planing transformation (recall that
this transformation evens out both a black dominant column and a white
dominant column, hence the $2$ factor).

\begin{lemma}\label{lemma:inv}
  Let $P$ be a Manhattan polyomino. For all $P'$, such that $P' =
  \phi^k(P)$ (the planing transformation applied $k$ times to $P$), the
  following invariant holds:
  \[
  \left|I_P\right| + \left|J_P\right| - 2v(P) = \left|I_{P'}\right| +
  \left|J_{P'}\right| - 2v(P')
  \]
  where $v(P)$ and $v(P')$ are the values of a maximum flow respectively
  in $T_P$ and $T_{P'}$.
\end{lemma}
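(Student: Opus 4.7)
The plan is to proceed by induction on $k$. The base case $k=0$ is trivial, so the inductive step reduces matters to proving the invariant for a single application, that is, for $P' = \phi(P)$. I would split the desired equality into two independent claims:
\[
|I_{P'}| + |J_{P'}| = |I_P| + |J_P| - 2
\quad\text{and}\quad
v(P') = v(P) - 1.
\]

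For the counting claim, the main observation is that the leveling height $a$ is even. Indeed, $s_i$ and $s_{i+1}$ are odd-sized, so both $h_{s_i}-1$ and $h_{s_{i+1}}-1$ are even; and each intermediate column lies strictly between two consecutive elements of $I_P \cup J_P$, so it is non-dominant, even-sized, and thus $h_k - 2$ is again even. Every column in the leveled range therefore has even height $a$ in $P'$, contains equally many black and white squares, and is dominant in neither color. Since columns outside $[s_i, s_{i+1}]$ are untouched, exactly $s_i$ and $s_{i+1}$ are removed from $I_P \cup J_P$, which gives the claim.

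For the flow claim, I would prove both inequalities separately. The direction $v(P) \geqslant v(P')+1$ goes by taking a maximum flow $f'$ on $F_{P'}$ and transporting each unit of $f'$ that crosses the edge joining $s_{i-1}$ to $s_{i+2}$ along the three-arc sub-path through $s_i$ and $s_{i+1}$ in $F_P$; then augmenting by one unit along $s \to s_i \to s_{i+1} \to t$. Capacity checks rely on $P' \subset P$, which gives $X_{P'} \leqslant X_P$ on every sub-range. The essential numerical point is that $X_P(s_i, s_{i+1}) \geqslant a+1$ whereas $X_{P'}(s_{i-1}, s_{i+2}) \leqslant a$, so the capacity $\lceil X_P(s_i, s_{i+1})/2 \rceil = a/2 + 1$ in $F_P$ exceeds the corresponding merged capacity in $F_{P'}$ by exactly one, leaving precisely enough slack for the augmenting path.

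The matching inequality $v(P) \leqslant v(P') + 1$ is the main obstacle. The strategy is to take a max flow $f$ on $F_P$ and, using flow decomposition together with the backward arcs in the network, re-route $f$ without changing its value so that one unit traverses $s \to s_i \to s_{i+1} \to t$; subtracting this unit and projecting the residual flow back across the contracted $s_i$-$s_{i+1}$ segment then yields a feasible flow in $F_{P'}$ of value $v(P)-1$. The delicate point is justifying that such a re-routing always exists: one must show that the unit arcs $(s,s_i)$ and $(s_{i+1},t)$ can be simultaneously saturated by some max flow, which will rely on the symmetry between forward and backward arcs on the $s$-path and on the minimality of $i$ built into the definition of $\phi$.
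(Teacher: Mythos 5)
Your reduction to a single application of $\phi$, the splitting into a counting claim and a flow claim, and the observation that the leveling height $a$ is even (because the end columns are odd-sized and every intermediate column, not being in $I_P\cup J_P$, is even-sized) are all sound, and in fact this makes explicit a point the paper leaves implicit; the lifting direction $v(P)\geqslant v(P')+1$, with the capacity bookkeeping $X_P(s_i,s_{i+1})\in\{a+1,a+2\}$ versus $X_{P'}(s_{i-1},s_{i+2})\leqslant a$, is also essentially the paper's correspondence between flows of $T_P$ and $T_{P'}$ read in the other direction.

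However, the hard direction $v(P)\leqslant v(P')+1$ is precisely where your proposal stops short of a proof. You name the needed fact --- that some maximum flow on $T_P$ can be re-routed so that one unit traverses the path $(s,s_i,s_{i+1},t)$ --- but you only say it ``will rely on'' the symmetry of the forward and backward arcs and the minimality of $i$; no exchange argument is given, and this is the heart of the lemma. The paper supplies it by fixing a maximum flow that minimizes $\sum_e f(e)$ and showing, by cancelling a longer flow-carrying path and re-inserting the three-arc path (the argument is spelled out in the correctness proof of the greedy algorithm), that such a flow must carry a positive amount on each of $(s,s_i)$, $(s_i,s_{i+1})$, $(s_{i+1},t)$. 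Note also that the condition you propose to establish --- simultaneous saturation of the two unit arcs $(s,s_i)$ and $(s_{i+1},t)$ --- is not by itself sufficient: the unit entering $s_i$ may leave towards $s_{i-1}$ and the unit exiting $s_{i+1}$ may arrive from $s_{i+2}$, so that $(s_i,s_{i+1})$ carries no flow and the path cannot be subtracted; you need all three arcs non-null, which is exactly what the cancel-and-reroute argument provides. Until that step is written out (and, secondarily, until you check that after deleting the unit path the residual transit flow across the contracted segment fits the merged capacity $\min(c(s_{i-1},s_i),\,c(s_i,s_{i+1})-1,\,c(s_{i+1},s_{i+2}))$ of $T_{P'}$), the proof is incomplete at its central point.
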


\begin{proof}[by induction on $k$] Let $f$ be a maximum flow on
  $T_{P_{k-1}}$ which minimizes $\sum_e f(e)$ (i.e.: the flow takes the
  shortest path from $s$ to $t$ given a choice). Since it is possible to
  apply the planing transformation to $P_{k-1}$, then there must be two
  consecutive vertices $s_i$ and $s_{i+1}$ corresponding to oddly-sized
  columns of different dominant colors (given several such pairs, we
  consider that for which $i$ is the smallest).

  Let us now build $T_{P_{k}}$, the flow network associated with $P_k =
  \phi(P_{k-1})$~: the vertices of $T_{P_k}$ are the vertices of
  $T_{P_{k-1}}$ from which we remove $\left\{ s_i, s_{i+1}\right\}$; the
  arcs of $T_{P_K}$ are the trace of the arcs of the arcs of $T_{P_{k-1}}$
  (\emph{trace} is a non-standard term by which we mean all arcs that are
  incident neither to $s_i$ nor to $s_{i+1}$) to which we add, should both
  vertices exist, the arcs $(s_{i-1}, s_{i+2})$ and $(s_{i+2}, s_{i-1})$
  with capacity
  \[
  \min\left( {c\left( {\left( {s_{i - 1} ,s_i } \right)}
      \right),c\left( {\left( {s_i ,s_{i + 1} } \right)} \right) -
      1,c\left( {\left( {s_{i + 1} ,s_{i + 2} } \right)} \right)}
  \right)\text{.}
  \]
  Suppose $s_i$ and $s_{i+1}$ respectively represent a black and white
  dominant column (figure \ref{fig6} illustrates the construction in this
  case); the other case is symmetrical. Then, because $\sum_e f(e)$ is
  minimal, maximum flow $f$ routes a non-null amount through the path $(s,
  s_i, s_{i+1}, t)$.

  \begin{figure}[htbp]
    \centering
    \def\captionsize{\large}
    \begin{tikzpicture}[scale=0.5, transform shape]
      \tikzstyle{arc}=[->, >=triangle 45, very thin]
      \tikzstyle{inside arc}=[->, >=triangle 45, very thin, bend right=45]
      \tikzstyle{bc}=[shape=circle, draw, fill=black]
      \tikzstyle{wc}=[shape=circle, draw, fill=white]
      \tikzstyle{sink}=[shape=circle, draw, fill=black!3]
      \tikzstyle{lbl}=[auto, fontsize=7pt]

      \path (3.5*1.3,3)  node[sink] (t) {\captionsize t}
            (3.5*1.3,-3) node[sink] (s) {\captionsize s};

      \path (0*1.3,0) node[bc] (1) {}
            (1*1.3,0) node[bc] (2) {}
            (2*1.3,0) node[wc] (3) {}
            (3*1.3,0) node[wc] (4) {}
            (4*1.3,0) node     (5) {};

      \draw[inside arc, dotted] (4) to (5);
      \draw[inside arc, dotted] (5) to (4);

      \draw[inside arc] (1) to node[swap,auto] {\captionsize 2} (2);
      \draw[inside arc] (2) to node[swap,auto] {\captionsize 2} (3);
      \draw[inside arc] (3) to node[swap,auto] {\captionsize 1} (4);

      \draw[inside arc] (2) to node[swap,auto] {\captionsize 2} (1);
      \draw[inside arc] (3) to node[swap,auto] {\captionsize 2} (2);
      \draw[inside arc] (4) to node[swap,auto] {\captionsize 1} (3);
      
      \draw[arc, bend left=45] (s) to node[auto] {\captionsize 1} (1);
      \draw[arc, bend left=35] (s) to node[swap,auto] {\captionsize 1} (2);

      \draw[arc, bend left=45] (3) to node[auto] {\captionsize 1} (t);
      \draw[arc, bend left=10] (4) to node[auto] {\captionsize 1} (t);
    \end{tikzpicture}\hfill
    \begin{tikzpicture}[scale=0.5, transform shape]
      \tikzstyle{sp arc}=[->, >=triangle 45, very thin]
      \tikzstyle{sp inside arc}=[->, >=triangle 45,very thin, bend right=45]
      \tikzstyle{arc}=[sp arc, gray, dashed]
      \tikzstyle{inside arc}=[sp inside arc, gray, dashed]
      \tikzstyle{bc}=[shape=circle, draw, fill=black]
      \tikzstyle{wc}=[shape=circle, draw, fill=white]
      \tikzstyle{sink}=[shape=circle, draw, fill=black!3]
      \tikzstyle{lbl}=[auto, fontsize=7pt]

      \path (3.5*1.3,3)  node[sink] (t) {\captionsize t}
            (3.5*1.3,-3) node[sink] (s) {\captionsize s};

      \path (0*1.3,0) node[bc] (1) {}
            (1*1.3,0) node[bc] (2) {}
            (2*1.3,0) node[wc] (3) {}
            (3*1.3,0) node[wc] (4) {}
            (4*1.3,0) node     (5) {};

      \draw[inside arc, dotted] (4) to (5);
      \draw[inside arc, dotted] (5) to (4);

      \draw[inside arc] (1) to node[swap,auto] {} (2);
      \draw[sp inside arc, thick]
                (2) to node[swap,auto] {\captionsize \;\;\;\; 2 - 1} (3);
      \draw[inside arc] (3) to node[swap,auto] {} (4);

      \draw[inside arc] (2) to node[swap,auto] {} (1);
      \draw[inside arc] (3) to node[swap,auto] {} (2);
      \draw[inside arc] (4) to node[swap,auto] {} (3);
      
      \draw[arc, bend left=45] (s) to node[auto] {} (1);
      \draw[sp arc, thick, bend left=35]
                      (s) to node[swap,auto] {\captionsize 1 - 1} (2);

      \draw[sp arc, thick, bend left=45] (3)
             to node[auto] {\captionsize 1 - 1} (t);
      \draw[arc, bend left=10] (4) to node[auto] {} (t);
    \end{tikzpicture}\hfill
    \begin{tikzpicture}[scale=0.5, transform shape]
      \tikzstyle{arc}=[->, >=triangle 45, very thin]
      \tikzstyle{inside arc}=[->, >=triangle 45, very thin, bend right=30]
      \tikzstyle{bc}=[shape=circle, draw, fill=black]
      \tikzstyle{wc}=[shape=circle, draw, fill=white]
      \tikzstyle{sink}=[shape=circle, draw, fill=black!3]
      \tikzstyle{lbl}=[auto, fontsize=7pt]

      \path (3.5*1.3,3)  node[sink] (t) {\captionsize t}
            (3.5*1.3,-3) node[sink] (s) {\captionsize s};

      \path (0*1.3,0) node[bc] (1) {}
            (3*1.3,0) node[wc] (4) {}
            (4*1.3,0) node     (5) {};

      \draw[inside arc, dotted] (4) to (5);
      \draw[inside arc, dotted] (5) to (4);
      \draw[arc, bend left=45] (s) to node[auto] {\captionsize 1} (1);
      \draw[arc, bend left=10] (4) to node[auto] {\captionsize 1} (t);

      \draw[inside arc] (1) to node[swap,auto] {
         \captionsize \;$\min(2,2\mbox{-}1,1) = 1$} (4);
      
      \draw[inside arc] (4) to node[swap,auto] {
         \captionsize $\min(2,2,1)=1$} (1);
      
    \end{tikzpicture}

    \caption{Building $T_{P_k}$ from $T_{P_{k-1}}$: a max flow for the
      network of figure \ref{fig3} necessarily goes through the path $(s,
      s_2, s_3, t)$. Here we show the impact of removing this path on the
      flow network's edges, vertices and capacities.}
    \label{fig6}
  \end{figure}
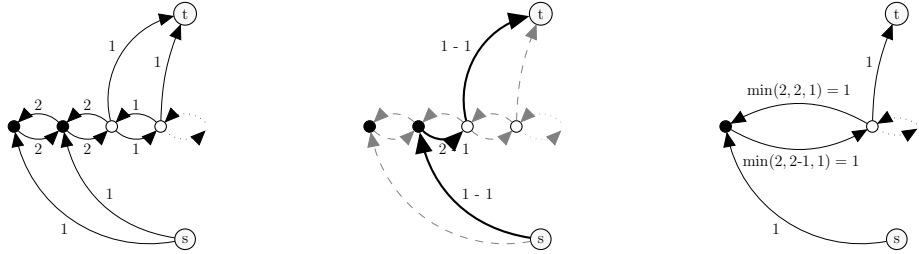

  Thus the maximum flow $f$ on $T_{P_{k-1}}$ induces a maximum flow $f'$
  on $T_{P_k}$, which is defined as the trace of $f$ on $T_{P_k}$ and
  updated, if necessary, by adding:
  \[\left\{
      \begin{array}{c}
        {f'}((s_{i-1},s_{i+2}))=f((s_{i-1},s_{i}))=f((s_{i+1},s_{i+2}))\\
        {f'}((s_{i+2},s_{i-1}))=f((s_{i},s_{i-1}))=f((s_{i+2},s_{i+1}))
      \end{array}\right..
  \]
  Proving that $f'$ is a maximum flow on $T_{P_k}$ is straightforward
  (suppose there is a better flow, and reach a contradiction). We then
  immediately remark that $v(P_{k-1}) = v(P_k) + 1$; and because the
  planing transformation applied to $P_{k-1}$ levels both a black dominant
  column and a white dominant column to obtain $P_k$, we have
  $\left|I_{P_k}\right| = \left|I_{P_{k-1}}\right| - 1$ and
  $\left|J_{P_k}\right| = \left|J_{P_{k-1}}\right| - 1$, thus:
  \[
  \left|I_{P_{k-1}}\right| + \left|J_{P_{k-1}}\right| - 2v(P_{k-1}) =
  \left|I_{P_k}\right| + \left|J_{P_k}\right| - 2v(P_k)\text{,}
  \]
  from which, by induction on $k$, we derive the lemma's invariant.\qed
\end{proof}

\begin{lemma}\label{lemma:cons}
  Let $P$ be a Manhattan polyomino and $v(P)$ the value of a maximum flow
  in $T_P$ (cf.~section~\ref{sec:to-flow}). There is a partial tiling of
  $P$ in which exactly $\left|I_P\right| + \left|J_P\right| - 2v(P)$
  squares are uncovered.
\end{lemma}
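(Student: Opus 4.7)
The plan is to proceed by induction on the maximum flow value $v(P)$, using Lemma~\ref{lemma:inv} to propagate the invariant through successive planing steps. For the \emph{base case} $v(P) = 0$, I would first show that the flow vanishes precisely when every dominant column has the same color, that is, when $I_P = \emptyset$ or $J_P = \emptyset$: indeed, if $s_a \in I_P$ and $s_b \in J_P$ coexisted, then the $s$-$t$ path $s, s_a, s_{a+1}, \ldots, s_b, t$ would be strictly positive on every arc (connectedness of $P$ forces every internal capacity $\lceil X(s_i,s_{i+1})/2\rceil$ to be at least $1$), contradicting $v(P) = 0$. Under this monochromatic hypothesis, a column-by-column construction using vertical dominoes stacked from the baseline leaves exactly one uncovered unit square at the top of each odd-height column, producing a partial tiling of $P$ with $|I_P| + |J_P|$ uncovered squares, which matches the formula.

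For the \emph{inductive step} $v(P) > 0$, the same contrapositive yields two consecutive dominant columns $s_i, s_{i+1}$ of opposite colors, so $\phi$ is applicable; set $P' = \phi(P)$. Lemma~\ref{lemma:inv} gives $|I_P|+|J_P|-2v(P) = |I_{P'}|+|J_{P'}|-2v(P')$, together with $v(P') = v(P)-1$ (from the same proof), so the induction hypothesis supplies a partial tiling of $P'$ with the correct number of uncovered squares. It then remains to cover the planed strip $P \setminus P'$ by dominoes with no uncovered square, and to glue this covering to the tiling of $P'$ in order to obtain the required partial tiling of $P$.

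The \emph{main obstacle} will be this final perfect tiling of the strip. By the definition of $\phi$, the strip consists of the columns $s_i \leq j \leq s_{i+1}$ between heights $a$ and $h_j$; its two endpoint columns have odd residual heights $\geq 1$, its interior columns have even residual heights $\geq 2$, and the column count $s_{i+1}-s_i+1$ is even because $s_i$ and $s_{i+1}$---being odd-sized columns of opposite dominant color---necessarily have indices of opposite parities. A direct count shows the strip is color-balanced: each even column is individually balanced, while the two odd endpoints, sitting on opposite-colored baseline squares, contribute one extra black and one extra white that cancel out. To build the actual tiling I would use the planing-style staircase illustrated in Figures~\ref{fig4c} and~\ref{fig5} and induct on the number of columns: pave the bottom row in $c/2$ adjacent horizontal-domino pairs, peel it off, and observe that the residual region decomposes either into smaller strips of the same shape (to which the induction applies) or into a comb of even-height columns that tile trivially with vertical dominoes. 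Concatenating all these dominoes with the partial tiling of $P'$ given by the induction hypothesis produces a partial tiling of $P$ with exactly $|I_P|+|J_P|-2v(P)$ uncovered squares.
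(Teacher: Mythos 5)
Your overall structure is the paper's own: a base case handled by vertical dominoes, leaving $\left|I_P\right|+\left|J_P\right|$ uncovered squares when the flow is zero, and an inductive step that applies the planing map $\phi$, invokes Lemma~\ref{lemma:inv} (together with $v(\phi(P))=v(P)-1$, which indeed comes out of that lemma's proof), and finishes by perfectly tiling the planed strip $P\setminus P'$; inducting on $v(P)$ instead of on the size of $P$ is an immaterial difference. Where you go beyond the paper is in actually proving that the strip is domino-tileable (the paper only points to figure~\ref{fig5}), and your parity bookkeeping there is correct: $a$ is even, the two end columns of the strip have odd residual height at least $1$, the interior ones have even residual height at least $2$, and the number of columns is even because opposite dominant colors force opposite index parities.

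The one step that does not work as written is your recursion for the strip. After peeling off the bottom row the profile is (even, odd, odd, \ldots, odd, even): the two end columns become even (possibly $0$) while \emph{all} interior columns become odd, so the residual region is in general neither a smaller strip of the same shape nor a comb of even columns --- for instance the strip $(1,4,6,4,4,3)$ leaves $(0,3,5,3,3,2)$. The repair is easy and stays inside your scheme: the interior odd columns are $c-2$ in number, hence evenly many and consecutive, so group them into adjacent pairs, each pair being a two-column strip of your shape, and tile the two even end columns vertically; equivalently, lay a second row of horizontal dominoes across those interior pairs and finish everything with vertical dominoes. With that correction the construction is complete. A final caveat: your claim that $v(P)=0$ forces $I_P=\emptyset$ or $J_P=\emptyset$ relies on all chain capacities being at least $1$, which holds for the connected input polyomino but can fail once a planing levels columns to height $0$; at those later stages the opposite-coloured consecutive pair should instead be extracted from the support of a positive flow (as the paper's own case distinction effectively does), a looseness the paper itself shares.
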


\begin{proof}
  We prove the property by induction on the size of $P$. Let $f$ be
  a maximum flow in $T_P$. We consider whether there are two consecutive
  vertices $s_i$ and $s_{i+1}$ which correspond to oddly-sized columns of
  different dominant colors, such that the flow of $f$ going from $s_i$ to
  $s_{i+1}$ is non-null.

  Either: no such vertices $s_i$ and $s_{i+1}$ exist\footnote{This case
    arises when there are no oddly-sized columns, or only oddly-sized
    columns of a given color.}, in which case we have $v(P) = 0$ (this
  would mean $f$ is the \emph{zero flow}). Indeed, any path from $s$ to
  $t$ would use at least one transversal arc $(s_i, s_{i+1})$ because, by
  construction, a vertex $s_i$ cannot be simultaneously linked to $s$ and
  $t$. We remark that the planing transformation cannot be applied. If we
  partially tile $P$, as in figure \ref{fig4b}, with vertical dominoes
  alone, leaving one square uncovered per oddly-sized column, we obtain
  a partial tiling which globally leaves $\left|I_P\right| +
  \left|J_P\right|$ squares uncovered---thus proving our property for this
  case.

  Or: such vertices $s_i$ and $s_{i+1}$ exist. We can then apply the
  planing transformation $\phi$, $P' = \phi(P)$. By induction, the
  property holds for $P'$. As we've seen, the (union of the) squares
  removed by the transformation is a domino-tileable Manhattan polyomino
  (figure \ref{fig5}); and by lemma \ref{lemma:inv},
  \begin{align*}
    \left|I_{P'}\right| + \left|J_{P'}\right| - 2v({P'}) =
    \left|I_P\right| + \left|J_P\right| - 2v(P)\text{.}
  \end{align*}
  \qed
\end{proof}

\begin{theorem}\label{thm:opt}
  Let $P$ be a Manhattan polyomino, and $d(P)$, the number of uncovered
  unit squares in an optimal partial tiling of $P$. The construction
  outlined in the proof of lemma \ref{lemma:cons} is optimal, i.e.:
  \[
  \left|I_P\right| + \left|J_P\right| - 2v(P) = d(p)\text{.}
  \]
\end{theorem}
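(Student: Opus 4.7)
The plan is to prove the identity by establishing the two opposite inequalities separately.

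The bound $d(P) \leq |I_P| + |J_P| - 2v(P)$ is free from Lemma \ref{lemma:cons}: that lemma provides an explicit partial tiling realising $|I_P| + |J_P| - 2v(P)$ uncovered squares, and $d(P)$ is by definition the infimum over all partial tilings.

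For the reverse inequality $d(P) \geq |I_P| + |J_P| - 2v(P)$, I would start from an arbitrary partial tiling $T$ with $u(T)$ uncovered squares and build a feasible flow $f_T$ on $F_P$ of value $(|I_P| + |J_P| - u(T))/2$. Since $v(P)$ is the max-flow value, we would then have $(|I_P| + |J_P| - u(T))/2 \leq v(P)$, i.e.\ $u(T) \geq |I_P| + |J_P| - 2v(P)$ for every $T$, and specialising $T$ to an optimal tiling gives the desired lower bound.

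The construction of $f_T$ proceeds by identifying, for each dominant column $s_i$, whether its \emph{excess} unit square (the extra black in a column of $I_P$, the extra white in a column of $J_P$) is covered in $T$. Under the balanced-polyomino hypothesis we have $|I_P| = |J_P|$ and uncovered squares come in colour-matched pairs, so $u(T)/2$ equals the number of dominant columns whose excess square is uncovered. Whenever the excess of some $s_i \in I_P$ is covered, it must be covered by a horizontal domino that initiates a chain of horizontal dominoes (possibly stacked with short vertical pieces between intermediate columns) ending at a dominant column of opposite colour whose own excess is covered; this chain is exactly what an $s$-to-$t$ path in $F_P$ represents. Sending one unit of flow along each such chain produces $f_T$, whose value is half the number of dominant columns with covered excess, hence $(|I_P| + |J_P| - u(T))/2$.

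The main obstacle is verifying the transversal capacities $\lceil X(s_i, s_{i+1})/2 \rceil$: one must show that the number of disjoint excess-carrying chains that can be packed through the corridor of height $X(s_i, s_{i+1})$ between two consecutive dominant columns never exceeds $\lceil X(s_i, s_{i+1})/2\rceil$. A parity-and-packing argument, analogous to the one implicit in the planing transformation of Figure \ref{fig5}, should yield precisely this ceiling bound (each chain consumes two adjacent rows of the corridor, with the ceiling absorbing the odd-row case when the excess squares happen to sit on the boundary row). Once the capacities are honoured and flow conservation at each $s_i$ is checked against the local balance of horizontal dominoes incident to the $s_i$-block, the construction of $f_T$ is complete and the theorem follows.
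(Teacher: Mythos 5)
The easy direction is handled correctly: $d(P)\leqslant |I_P|+|J_P|-2v(P)$ does follow at once from Lemma~\ref{lemma:cons}, and this is also how the paper uses that lemma. The genuine gap is in your reverse inequality. Your tiling-to-flow construction is only a sketch, and the missing steps are exactly the substance of the theorem. First, ``the excess square'' of a dominant column and the ``chains'' it initiates are not well-defined attributes of an arbitrary partial tiling $T$: an odd black-dominant column has no canonical surplus square, and the uncovered squares of $T$ need not sit in dominant columns at all (a tiling may strand squares inside even columns), so your claim that $u(T)/2$ equals the number of dominant columns whose excess is uncovered is false as stated for general $T$ and unproven even for optimal $T$; some normalization of the tiling would have to be established first. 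Second, even granting a chain decomposition, you must verify flow conservation at intermediate vertices and, above all, the capacity constraint: why at most $\lceil X(s_i,s_{i+1})/2\rceil$ disjoint chains can cross the corridor between two consecutive dominant columns. The count ``each chain consumes two rows'' needs care when the minimum is attained at $s_i$ or $s_{i+1}$ themselves (a chain starting there consumes only one square of that column) and when the bottleneck column is odd. You explicitly defer this to a parity-and-packing argument that ``should yield'' the ceiling bound; that is the hard combinatorial core, not a routine check, so the proposal does not yet prove the theorem.

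It is worth noting that the paper establishes the lower bound in the opposite duality direction, which avoids dissecting an arbitrary tiling altogether: it takes a \emph{minimum cut} of the flow network, reads off a list of bottleneck columns, and uses them to delimit zones whose black squares form a set $H_P$ satisfying $|H_P|-|\Gamma(H_P)|=|I_P|-v(P)$ (proved by the same planing-invariance mechanism as Lemma~\ref{lemma:inv}, evaluated on the fully planed polyomino $P_{\mathrm{red}}$); Hall's theorem then yields $d(P)\geqslant 2\bigl(|H_P|-|\Gamma(H_P)|\bigr)=|I_P|+|J_P|-2v(P)$, using $|I_P|=|J_P|$ for a balanced polyomino. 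If you wish to salvage your route, you would need both a normalization lemma localizing the deficiencies of an optimal tiling at dominant columns and a full proof of the corridor capacity bound; the cut-plus-Hall argument sidesteps both difficulties.
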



\begin{remark}
  We are now going to work on $G_P$ which is the edge-adjacency graph of
  polyomino $P$: the vertices of $G_P$ are the unit squares of $P$, and an
  edge connects two vertices of $G_P$ if and only if the two corresponding
  squares share an edge in $P$. (By contrast, $T_P$ is the flow network
  constructed in section \ref{sec:to-flow}.)
\end{remark}

The idea behind the proof of theorem \ref{thm:opt} is as follow: we
isolate a subset ${\mathcal B}(P)$ of black squares of $P$, such that the
(white) neighbors (in graph $G_P$) to squares of ${\mathcal B}(P)$ verify
the relation
\[
\left|{\mathcal B}(P)\right| - \left|\Gamma\left({\mathcal
      B}(P)\right)\right| = \left|I_P\right| + \left|J_P\right| -
2v(P)\text{.}
\]
We can then conclude, using Hall's theorem, that $d(P) \geqslant
\left|{\mathcal B}(P)\right| - \left|\Gamma\left({\mathcal
      B}(P)\right)\right|$. To construct ${\mathcal B}(P)$, we use a
minimal cut\footnote{Recall that a cut of graph $G=(V,E)$ is a partition
  of its vertices in two subsets $X_1$ and $X_2$; let $E_C$ be the set of
  edges such that one vertex is in $X_1$ and the other in $X_2$: the value
  of the cut is the sum of the capacity of each edge in $E_C$; a minimal
  cut is a cut which minimizes this sum. \textbf{In the case of flow
    networks}, a cut separates the source $s$ from the sink $t$.} of graph
$T_P$, from which we deduce a list of ``\emph{bottlenecks}''. These
bottlenecks mark the boundary of zones containing squares of ${\mathcal
  B}(P)$.

Intuitively, once bottlenecks are planed (that is to say when we have
tiled them following the template given by our planing transformation),
they isolate zones which have either too many white or too many black
squares.

\begin{figure}[ht]
  \centering
  \begin{tikzpicture}[scale=0.4]
    \tikzstyle{bc}=[shape=circle, draw=white, fill=black]
    \tikzstyle{wc}=[shape=circle, draw=black, fill=white]

    \Manhattan{5,4,3,2,1,1,2,3,2,3,1,2,1,3,2,1,3,1}
    \hPattSquare{0}{1}\hPattSquare{0}{3}\hPattSquare{0}{5}
    \hPattSquare{1}{2}\hPattSquare{1}{4}
    \hPattSquare{2}{1}\hPattSquare{2}{3}
    \hPattSquare{3}{2}\hPattSquare{4}{1}

    \path (9.0,8.0) node[draw, shape=circle] (t) {t}
          (7.0,-3.0) node[draw, shape=circle] (s) {s};

    \path (0.5,0.5) node[bc] (1) {}
          (2.5,0.5) node[bc] (2) {}
          (4.5,0.5) node[bc] (3) {}
          (5.5,2.5) node[wc] (4) {}
          (7.5,4.5) node[wc] (5) {}
          (9.5,4.5) node[wc] (6) {}
          (10.5,0.5) node[bc] (7) {}
          (12.5,0.5) node[bc] (8) {}
          (13.5,4.5) node[wc] (9) {}
          (15.5,2.5) node[wc] (10) {}
          (16.5,0.5) node[bc] (11) {}
          (17.5,2.5) node[wc] (12) {};

    \tikzstyle{flow}=[decorate,decoration={snake, post length=2mm}]
    \tikzstyle{arc}=[->, >=triangle 45, very thin]
    \draw[arc, bend left=20] (s) to (1);
    \draw[arc, bend left=20] (s) to (2);
    \draw[arc, bend left=20, flow] (s) to (3);
    \draw[arc, bend right=30, flow] (s) to (7);
    \draw[arc, bend right=25, flow] (s) to (8);
    \draw[arc, bend right=20, flow] (s) to (11);

    \draw[arc, bend left=40, flow] (4) to (t);
    \draw[arc, bend left=20] (5) to (t);
    \draw[arc, bend right=10, flow] (6) to (t);
    \draw[arc, bend right=20, flow] (9) to (t);
    \draw[arc, bend right=40, flow] (10) to (t);
    \draw[arc, bend right=50] (12) to (t);

    \path[draw, dashed, gray] (1) -- (2) -- (3)
      (4) -- (5) -- (6) (7) -- (8) (9) -- (10)
      (11) -- (12);

    \draw[arc, flow, white, very thick] (3) to (4);
    \draw[arc, flow, white, very thick] (7) to (6);
    \draw[arc, flow, white, very thick] (8) to (9);
    \draw[arc, flow, white, very thick] (11) to (10);

    \draw[arc, flow] (3) to (4);
    \draw[arc, flow] (7) to (6);
    \draw[arc, flow] (8) to (9);
    \draw[arc, flow] (11) to (10);

    \draw[red, very thick] (3,5) to (9, -4);
  \end{tikzpicture}
  \label{fig7}
  \caption{A polyomino $P$, its flow network $T_P$ (the transversal arcs
    are only hinted at, with dashed lines; the max flow is represented
    with snake-lines), and, in red, a minimal cut of $T_P$ which
    separates $P$ in two zones (each zone is connected here, but that
    may not be the case). The black striped squares on the left belong
    to ${\mathcal B}(P)$: in accordance with Hall's theorem,
    $\left|{\mathcal B}(P)\right| - \left|\Gamma({\mathcal B}(P))\right| =
    9 - 7 = 2$ is the number of uncovered squares in the optimal partial
    tiling.}
\end{figure}
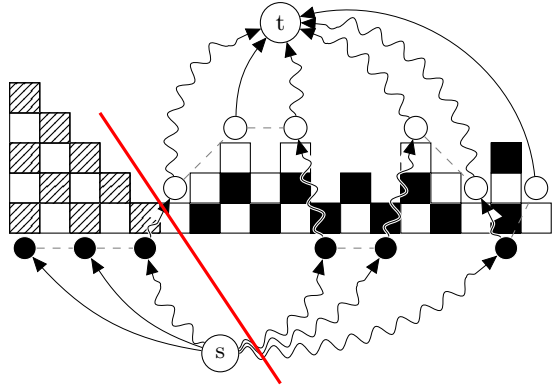

\subsection{Greedy algorithm}
Having proven that our optimal partial tiling problem can be reduced to
a network flow problem, we will now present an algorithm to efficiently
solve this specific brand of network flow problem (indeed,
a general-purpose network flow algorithm would not take into account those
properties of our network which are a consequence of the way it was
constructed).

We will first introduce the notion of $f$-tractability for two vertices of
our flow network. The pair $\left(s_i, s_j\right)$ (for which we consider
that $i < j$) is said to be $f$-tractable if:
\begin{itemize}
\item $s_i$ and $s_j$ correspond to two oddly-sized columns of
  \emph{different} dominant columns\footnote{Recall that this is
    \emph{almost} the condition under which the plan\-ing transformation
    can be applied: the planing transformation requires, in addition, that
    $s_i$ and $s_j$ be so that $j = i+1$ ---per the ordering
    which have defined in section \ref{sec:to-flow}.};
\item if $s_i$ corresponds to a black-dominant column, then the arcs
  $\left(s_i, s_{i+1}\right)$, ..., $\left(s_{j-1}, s_j\right)$ are not
  saturated;
\item or else, if $s_i$ corresponds to a white-dominant column, then the
  arcs $\left(s_j, s_{j-1}\right)$, ..., $\left(s_{i+1}, s_i\right)$ are
  not saturated;
\item for every $k$, $i < k < j$, if $s_k \in I_P$ (resp. $s_k \in J_P$)
  then $\left(s, s_k\right)$ (resp. $\left(s_k,t \right)$) is saturated.
\end{itemize}
This might seem like an elaborate notion, but in fact, it only translates
whether, in terms of flow increase, we can apply our transformation.

We are going to prove that the following algorithm builds a maximum flow:
\begin{enumerate}
\item we begin with a null flow $f$;
\item while there exists two vertices $s_i$ and $s_j$ with $i < j$ which
  are $f$-tractable:
  \begin{enumerate}
  \item we take the first pair of vertices in lexicographic order which is
    $f$-tractable,
  \item and we augment the flow $f$ by $1$ on the arcs belonging to the path $[s, s_i, s_{i+1}, \ldots, s_j, t]$ or $[s, s_j, s_{j-1},
    \ldots, s_i, t]$ (depending on the dominant color of $s_i$);
  \end{enumerate}
\item we return the flow $f$.
\end{enumerate}
\begin{proof}[Correctness of the greedy algorithm]
Let $(s_i,s_{i+1})$ be the first pair of vertices in lexicographic order which, at the beginning of the algorithm, is $f$-tractable. We suppose here that $s_i$ corresponds to a black dominant column (the other case is symetrical). We want to prove that there is a maximum flow $f$, which minimizes $\sum_{v\in T_P} f(v)$, such that none of the following values are null: $f((s,s_i)),f((s_i,s_{i+1})),f((s_{i+1},t))$. If a maximum flow following this last condition does not exist, then one of those three arcs must be null and another one must be saturated, otherwise the flow $f$ can be augmented. Now, by the minimality of $\sum_{v\in T_P} f(v)$, it follows that $(s,s_i)$ is not saturated (otherwise we can remove 1 to $f$ on every arc belonging to a minimal path from $s$ to $t$ using $(s,s_i)$ with non-zero flow, and add 1 to $f((s,s_i)),f((s_i,s_{i+1})),f((s_{i+1},t))$. We obtain a maximum flow which contradicts the minimality of $\sum_{v\in T_P} f(v)$.) The arc $(s_i,s_{i+1})$ cannot be saturated for the same reason. So, $(s_{i+1},t)$ is saturated. But in this case, as $(s,s_i)$ and  $(s_i,s_{i+1})$ are not saturated, we can remove 1 to $f$ on every arc belonging to a non-zero flow path from $s$ to $t$ using $(s_{i+1},t)$, and add 1 to $f((s,s_i)),f((s_i,s_{i+1})),f((s_{i+1},t))$. We again obtain a new maximum flow with total weigth at most $\sum_{v\in T_P} f(v)$. That proves what we want. Now, the next recursive steps of the algorithm are exactly proved by the same way but on transport networks obtained by convenient diminution of the arc capacities.
\end{proof}

To briefly but correctly analyze the complexity of this algorithm, we have to describe the representation of the entry. If the Manhattan polyomino is given by the list of its column heights, it can be stored in $O(l\log(\max(h_i)))$ bits where $l$ is the number of columns. We can consider two steps. The first one consists in obtaining the transport network $T_P$ from $P$. This can be done in one pass and we make $O(l)$ elementary operations ($\min$, division by 2).  The second part consists in solving the transport networks problem. Using stacks, the algorithm can solve the problem making a unique pass on the vertices of $T_P$. So, its complexity is in $O(|I_P|)$. Thus, globally the algorithm works in $O(l)$ elementary operations. If we make the hypothesis that the number of columns is approximatively the average of the height of the columns. Our algorithm is clearly sublinear in $n$ the number of unit squares. Moreover, if we consider the complexity according to the size of the input, this algorithm is linear.

\section{Conclusion}
In this paper, we have illustrated that particular optimal partial domino
tiling problems can be solved with specific algorithms. We have not yet
been able to evidence a linear algorithm for the general problem of
partial domino tiling of polyominoes without holes. This appears to be an
attainable goal, as \cite{Th} has devised such an algorithm for the exact
domino tiling problem.

Currently, the best known complexity for the general partial domino tiling
problem is $O(n\cdot\sqrt{n})$ using a \emph{maximum matching algorithm in
  bipartite graphs}. To improve this bound, it seems irrelevant to obtain
an analogue to the notion of height which is a cornerstone of the tiling
algorithms. Nevertheless, it can be proven that two optimal partial
tilings are mutually accessible using extended flips (which contain
classical flips \cite{Bo2,Re} and taquin-like moves). This fact involves
in a sense a weak form of ``lattice structure'' over the set of optimal
partial tilings. This is relevant and could be taken into account to
optimize the algorithms.

\newpage

\section{Annexe (proof of theorem 1)}

So, we have proved that on the sequence of Manhattan polyominoes obtained by successive planing transformations, the function
$|I_\_|+|J_\_|-2v(\_)$ is invariant. 
We need another invariant to conclude that $|I_P|+|J_P|-2v(P) \leq d(P).$ It follows from a list of ``bottlenecks'' which prevent the matching of columns. 

We denote by $P_{\mathrm{red}}=(h^{\mathrm{red}}_1,...,h^{\mathrm{red}}_n)$ the Manhattan polyomino obtained from $P$ after $v(P)$
planing transformations.

We can suppose that the least $x$ such that $(x,x')$ belongs to $Y_P$ (the set of edges of $F_P$ that crosses the minimal cut)
 corresponds to a black dominant column (if this is not the case, we can inverse the colors). Let $g(x,x')$ be the smallest $k
\in \set{x, \ldots, x'}$ such that $h_{k} = X(x, x')$. We denote by $\mathcal{G}=\{g(x,x'); (x,x')\in Y_P\}$ the \emph{list of bottlenecks} of $P$ and we sort the elements of $\mathcal{G}$ in ascending order.

$\mathcal{G}=\left (g^0_1,\ldots,g^0_k \right )$. Now, if $i$ is odd (resp. even) and is the index of a white dominant column, we put $g_i=g^0_i-1$ (resp. $g_i=g^0_i+1$) otherwise we put $g_i=g^0_i$. With the hypothesis, we can observe that $(h_1,h_2,\ldots,h_{g_1})$ has more black unit squares than white ones. 
We put $$H_P=B\left(\left(h_1,h_2,\ldots,h_{g_1}\right)\right)\cup
B\left(\left(h_{g_2},\ldots,h_{g_3}\right)\right)\cup\ldots$$ and
$$K_P=W\left(\left(h_1,h_2,\ldots,h_{g_1}\right)\right)\cup
W\left(\left(h_{g_2},\ldots,h_{g_3}\right)\right)\cup\ldots$$ 
We have $2(|H_P|-|\Gamma(H_P)|)\leq d(P)$ and we want to prove that $2(|H_P|-|\Gamma(H_P)|)= |I_P|+|J_P|-2v(P)$. In order to do that, let $P'$ be the Manhattan polyomino obtained from $P$ by a planing transformation
as we have processed previously. Firstly, we are going to show that $|H_P|-|\Gamma(H_P)|=|H_{P'}|-|\Gamma(H_{P'})|$ where
$$H_{P'}=B\left(\left(h'_1,h'_2,...,h'_{g_1}\right)\right)\cup
B\left(\left(h'_{g_2},...,h'_{g_3}\right)\right)\cup...$$ We put $\mathcal{B}(H_P)$ (resp. $\mathcal{W}(H_P)$) the number
of odd black dominant columns (resp. white) in $P$ which belongs to the columns of indices in
$E=\{1,2,...g_1\}\cup\{g_2,...,g_3\}\cup...$. An easy observation allows us to see that

\begin{eqnarray}
\nonumber |H_{P}|-|\Gamma(H_{P})| & = &|H_{P}|-|K_{P}|-\sum\limits_{v\in Y_P}{c(v)}\\
\nonumber & = &\mathcal{B}(H_P)-\mathcal{W}(H_P)-\sum\limits_{v\in Y_P}{c(v)}
\end{eqnarray}
When we apply the planing transformation on $P$, we delete :
\begin{itemize}
 \item a black dominant column in $E$ and a white column in $E^c$ and $\sum\limits_{v\in Y_{P'}}{c(v)}=\sum\limits_{v\in Y_P}{c(v)}-1$,
\item a black dominant and a white dominant column in $E$ or a black dominant and a white column in $E^c$ and $\sum\limits_{v\in Y_{P'}}{c(v)}=\sum\limits_{v\in Y_P}{c(v)}$.
\end{itemize}
  This fact proves that $|H_{\_}|-|\Gamma(H_{\_})|$ is invariant on every sequence of Manhattan polyominos obtained by successive planing transformations.

Now, let us remark that 
$$|H_{P_{\mathrm{red}}}|-|\Gamma(H_{P_{\mathrm{red}}})|=\mathcal{B}(H_{P_{\mathrm{red}}})=|I_{P_{\mathrm{red}}}|$$ ($\mathcal{W}(H_{P_{\mathrm{red}}})=\sum\limits_{v\in Y_{P_{\mathrm{red}}}}{c(v)}=0)$ and that $|I_{P_{\mathrm{red}}}|=|I_P|-v(P)$. Indeed, we have made $v(P)$ planing transformations and each of them reduces by 1 the number of black dominant columns. 

So, $|H_{P}|-|\Gamma(H_{P})| =|H_{P_{\mathrm{red}}}|-|\Gamma(H_{P_{\mathrm{red}}})|=|I_P|-v(P)$.

Moreover, we have assumed initially that the polyomino is balanced. So, $|I_P|=|J_P|$. Consequently,
we have proved that $2(|H_{P}|-|\Gamma(H_{P})|)=|I_P|+|J_P|-2v(P)$. 
Thus, we can conclude that $|I_P|+|J_P|-2v(P)\leq d(P).$

\end{document}